\newtheorem{theorem}{Theorem}[section]
\newtheorem{corollary}{Corollary}
\newtheorem{lemma}[theorem]{Lemma}
\newtheorem{proposition}{Proposition}
\theoremstyle{definition}
\newtheorem{definition}[theorem]{Definition}
\title[Bias of Nonnegative Mechanisms] %Use the shortened version of the full title
      {Observations on the Bias of Nonnegative Mechanisms for Differential Privacy}
\author[A. McGlinchey and O. Mason]{}
\subjclass{Primary: 68P27; Secondary: 62E99.}
 \keywords{Differential privacy; Laplace distribution; nonnegative data; bias; post-processing}
 \email{oliver.mason@mu.ie}
 \email{aisling.mcglinchey@mu.ie}
\thanks{This work is supported by SFI grant 13/RC/2094}
\thanks{$^*$ Corresponding author: O. Mason}
\begin{document}
\maketitle

% Enter the first author's name and address:
\centerline{\scshape Aisling McGlinchey}
\medskip
{\footnotesize
% please put the address of the first author
 \centerline{Dept. of Mathematics and Statistics,}
   \centerline{ Maynooth University, Co. Kildare, Ireland}
   \centerline{\& Lero, the Science Foundation Ireland Research Centre for Software}
} % Do not forget to end the {\footnotesize by the sign }

\medskip

\centerline{\scshape Oliver Mason$^*$}
\medskip
{\footnotesize
 % please put the address of the second  and third author
 \centerline{ Dept. of Mathematics and Statistics,}
   \centerline{ Maynooth University, Co. Kildare, Ireland}
   \centerline{\& Lero, the Science Foundation Ireland Research Centre for Software}
}

\bigskip

% The name of the associate editor will be entered by an editorial staff
% "Communicated by the associate editor name" is not needed for special issue.
 \centerline{(Communicated by the associate editor name)}

%The abstract of your paper
\begin{abstract}
We study two methods for differentially private analysis of bounded data and extend these to nonnegative queries.  We first recall that for the Laplace mechanism, boundary inflated truncation (BIT) applied to nonnegative queries and truncation both lead to strictly positive bias.  We then consider a generalization of BIT using translated ramp functions.  We explicitly characterise the optimal function in this class for worst case bias.  We show that applying any square-integrable post-processing function to a Laplace mechanism leads to a strictly positive maximal absolute bias.  A corresponding result is also shown for a generalisation of truncation, which we refer to as restriction.  We also briefly consider an alternative approach based on multiplicative mechanisms for positive data and show that, without additional restrictions, these mechanisms can lead to infinite bias.    
\end{abstract}

%The title of your section 1
\section{Introduction}
\label{sec:Intro}

There has been substantial interest in the past two decades in the development and analysis of formal privacy frameworks for the release and analysis of personal data \cite{DworkRoth14, Torra17}.  Differential privacy (DP) \cite{Dwork06A} provides very strong, provable privacy guarantees and is one of the most important privacy frameworks from both a practical and theoretical standpoint.  Since its introduction, many results have been derived on the fundamental theory of DP and its relation to other privacy concepts (see for instance \cite{DworkRoth14, HalRinWas12, SorDom13, DomSor15, SorDomSanMeg17, KalSanSar18}).  Some of the most important differentially private mechanisms  proposed in the literature include the Laplace,  Gaussian, and Exponential mechanisms \cite{SheTal07}.  Numerous other mechanisms have also been proposed, including, for example, the discrete Laplace \cite{SorDomSanMeg17}, Staircase, and Generalized Gaussian mechanisms \cite{Liu19}.

Since its introduction, differential privacy has been used in a diverse range of applications.  For example, \cite{LeNyPappas13} describes applications to control theory while \cite{HolLeiMas17} considers differential privacy for randomized response surveys.  A high-profile and important application of differential privacy that is relevant to the problems considered here is its use for census data.  Recently, the US Census Bureau announced a plan to use differential privacy for the 2020 census results \cite{Abowd}.  Several researchers have noted specific challenges that arise in the use of differential privacy in such contexts.  In particular, the nonnegative, discrete, nature of the count data in a census makes standard mechanisms unsuitable and requires either new mechanisms or appropriate ways of adapting existing ones \cite{Cal20, FioHen20A}.  It is fundamentally important that mechanisms for constrained data (such as integer-valued, nonnegative or hierarchical data) respect these constraints; otherwise the outputs from the mechanisms may be unrealistic.  The recent paper \cite{Cal20} presents a novel mechanism for count queries on census-type data.  The framework developed is for approximate or relaxed $(\varepsilon, \delta)$-differential privacy and can be used to design integer-valued mechanisms with specified range and error probability.  A related line of work in \cite{FioHen20A} describes an optimization-based approach using the geometric distribution that respects integral as well as hierarchical constraints between groups in the dataset.  Both of these approaches are suitable for count queries arising in connection with census data.

In this paper, our interest is in adapting existing mechanisms so that they respect nonnegativity constraints.  We focus on two methods, post-processing and restriction (with the latter being equivalent to rejection sampling), for the, admittedly limited, but practically important case of the Laplace mechanism.  This is arguably the most widely used and studied mechanism for $\varepsilon$-differential privacy and real-valued queries; according to the authors of \cite{HolAntBraAon18}, it is the ``workhorse of differential privacy''.  The range of Laplace random variables is $\mathbb{R}$, making it unsuitable for queries and data subject to constraints.  For bounded queries, taking values in some interval $[l, u] \subset \mathbb{R}$ for example, it is necessary to adapt the mechanism to ensure realistic, meaningful outputs.  Recent work describing ways of addressing this issue for bounded queries can be found in \cite{HolAntBraAon18, Liu16}.  

Two approaches to construct bounded $\varepsilon$-differentially private mechanisms from the Laplace mechanism are described in \cite{Liu16}: namely, \emph{truncation} and \emph{boundary inflated truncation}.  Boundary inflated truncation works by setting values outside the query range to the nearest boundary value; this is an application of the general principle that differential privacy is not affected by \emph{post-processing} with deterministic functions \cite{DworkRoth14}.  Truncation resamples from the Laplace distribution until a value in the specified range is obtained.  As noted in \cite{Liu16}, this is equivalent to a form of \emph{rejection sampling}.  These methods were also studied for the Generalized Gaussian mechanism in \cite{Liu19}. In \cite{HolAntBraAon18}, the authors studied truncated Laplace mechanisms (they refer to them as bounded Laplace mechanisms) and derived conditions for these mechanisms to satisfy relaxed $(\varepsilon, \delta)$ differential privacy.  The bias formulae derived for truncation and boundary inflated truncation in \cite{Liu16} motivate our work here.

Our motivation stems from control theory \cite{LeNyPappas13}, in particular so-called positive systems \cite{ValRan2018} which arise in applications such as population dynamics and transportation.  For this reason, we focus on constructions for nonnegative-valued queries.  Moreover, with the application to control systems in mind, we do not assume a priori that the data or query outputs are bounded.  This contrasts with the recent, interesting, results in \cite{FioHen20B} motivated by the hierarchical constraints arising in applications to census data.  This latter paper studies the effect of post-processing for queries constrained to bounded feasible sets described by linear equations.  It is shown that when a particular class of post-processing functions, known as projections, is applied to the Laplace mechanism, bias is specifically the result of the addition of a nonnegativity constraint.  While outside the scope of this paper, it is important to note that there are other possible approaches to the problem of private data analysis on nonnegative or otherwise constrained data.  In particular, as suggested by one of the anonymous reviewers of this paper, a Bayesian framework \cite{Bayes1} could be used to construct posterior distributions for the query output based on the noisy outputs and the analyst's knowledge of the mechanism used and data constraints.      

\emph{Summary of Contributions}

The contributions of the paper are qualitative in nature and concern the bias of nonnegative post-processed Laplace mechanisms and general restricted mechanisms.  Our two main results, Proposition \ref{prop:infbiaspos} and Proposition \ref{prop:biasinevrest} establish a fundamental limitation for post-processing and restriction by showing that bias is unavoidable for both approaches.  

In Section \ref{sec:bias1}, we present simple adaptations of the formulae for bias and MSE from \cite{Liu16} for queries taking values in $[0, \infty)$ and briefly consider a type of multiplicative mechanism introduced in \cite{LeNyPappas13}, for strictly positive queries.  We show that without additional restrictions, these mechanisms can have infinite worst case bias. 

The simplest approach to post-processing for nonnegative queries is to post-process with the standard ramp function.  In Section \ref{sec:BiasOpt}, we demonstrate that nonnegative mechanisms with improved worst-case bias can be obtained using alternative post-processing functions; we show this by considering the simple case of translated ramp functions and determine the optimal such function for worst case bias.  We prove that positive bias is inevitable for  any post-processed Laplace mechanism in Proposition \ref{prop:biasinevpp} and prove the stronger result that the worst case bias \emph{is bounded away from zero} for any such mechanism in Proposition \ref{prop:infbiaspos}.  These initial results suggest several directions for future research.  In particular, the problems of determining the optimal post-processing function and, more generally, the optimal mechanism for nonnegative queries, for the worst case bias are interesting theoretically and practically.    

We prove a similar fundamental limitation for restricted mechanisms in Proposition \ref{prop:biasinevrest}.  Here we show that for \emph{any initial mechanism} (not necessarily the Laplace mechanism) the maximal absolute bias of the restricted mechanism is again positive.  Taken together, these results establish important, fundamental properties and limitations of the generalizations of truncated and boundary inflated truncated mechanisms studied here.  From a practical point of view, they show that if an unbiased nonnegative mechanism is required, then alternative approaches need to be considered.  They also suggest several interesting directions for future research, some of which we describe in our concluding remarks.

\section{Background and Notation}
\label{sec:bkgd}
We briefly recall some key definitions and results on differential privacy; broadly, we follow the formalism in the paper \cite{HolLeiMas14}.  $(\Omega, \mathcal{F}, \mathbb{P})$ denotes a probability space where $\mathcal{F}$ is a $\sigma$-algebra of subsets of $\Omega$ and $\mathbb{P}$ is a probability measure on $\Omega$.  For a real-valued random variable $X:\Omega \to \mathbb{R}$, $\mathbb{E}[X]$ denotes its expectation.  

$D$ is a set representing the possible databases of interest and we assume it is equipped with a symmetric, reflexive (but not transitive) adjacency relation $\sim$; this defines the notion of similarity for databases.  A (real-valued) \textbf{query} is a mapping $Q:D \to \mathbb{R}$.  Given a query $Q$, a \emph{mechanism} is a collection of random variables $\{X_{Q, d}: d \in D\}$ where $X_{Q, d}:\Omega \to \mathbb{R}$ is a real-valued random variable for each $d \in D$.  In a slight abuse of notation, we shall often refer to the mechanism $X_{Q, d}$.  

We only consider output perturbation mechanisms here: if $Q$ has range $Q(D)$, such a mechanism is defined by a family $\{Y_q : q \in Q(D)\}$ of random variables $Y_q:\Omega \to \mathbb{R}$.  For $d \in D$, we simply set $X_{Q, d} = Y_{Q(d)}$.  

Given $\varepsilon > 0$, the mechanism $X_{Q,d}$ is $\varepsilon$-differentially private if 
\begin{equation}
\label{eq:DP}
\mathbb{P}(X_{Q,d} \in A) \leq e^{\varepsilon}\mathbb{P}(X_{Q,d'} \in A)
\end{equation} 
for all $d \sim d'$ in $D$ and all Borel sets $A \subseteq \mathbb{R}$.

It is well known \cite{DworkRoth14} that for any measurable, deterministic, function $\phi:\mathbb{R} \to \mathbb{R}$, if $X_{Q, d}$ is $\varepsilon$-differentially private, then so is the post-processed mechanism $\hat{X}_{Q, d} = \phi(X_{Q, d})$. 

The \textbf{sensitivity} of the query $Q:D \to \mathbb{R}$ is given by:
\[\Delta(Q) := \sup\{|Q(d) - Q(d')| : d, d' \in D, \, d \sim d'\}.\]

Recall, that a Laplace random variable with mean $q \in \mathbb{R}$ and scale parameter $b > 0$ is defined by the probability density function (pdf): 
\begin{equation}\label{eq:Lap}
f_q(x) = \frac{1}{2b} e^{-\frac{|x-q|}{b}}, \quad x \in \mathbb{R}.
\end{equation}
We shall use $L_b$ to denote a Laplace random variable with mean $0$ and scale parameter $b$.  The following result concerning the Laplace mechanism and differential privacy is well known; see \cite{Dwork06}, \cite{DworkRoth14}.

\begin{proposition}\label{prop:LapBasic}
Let $Q:D \to \mathbb{R}$ have sensitivity $\Delta$ and $\varepsilon > 0$ be given.  The \emph{Laplace mechanism} defined by $X_{Q, d} = Q(d) + L_b$ where $b \geq \frac{\Delta}{\varepsilon}$ is $\varepsilon$-differentially private.  
\end{proposition}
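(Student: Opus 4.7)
The plan is to verify the defining inequality \eqref{eq:DP} pointwise on the densities and then integrate. Fix adjacent databases $d \sim d'$ in $D$ with query values $q = Q(d)$ and $q' = Q(d')$; since this is an output-perturbation mechanism, $X_{Q,d} = q + L_b$ and $X_{Q,d'} = q' + L_b$ have densities $f_q$ and $f_{q'}$ given by \eqref{eq:Lap}. Both densities are strictly positive on $\mathbb{R}$, so the pointwise ratio $f_q(x)/f_{q'}(x)$ is well defined.

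First I would compute this ratio directly from \eqref{eq:Lap}:
\[
\frac{f_q(x)}{f_{q'}(x)} \;=\; \exp\!\Bigl(\tfrac{1}{b}\bigl(|x-q'|-|x-q|\bigr)\Bigr).
\]
By the reverse triangle inequality, $\bigl||x-q'|-|x-q|\bigr| \leq |q-q'|$, and by the definition of sensitivity, $|q-q'| \leq \Delta$. Combining these with the hypothesis $b \geq \Delta/\varepsilon$ yields $f_q(x) \leq e^{\Delta/b}\, f_{q'}(x) \leq e^{\varepsilon}\, f_{q'}(x)$ for every $x \in \mathbb{R}$.

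Finally, for any Borel set $A \subseteq \mathbb{R}$ I would integrate this pointwise bound:
\[
\mathbb{P}(X_{Q,d} \in A) = \int_A f_q(x)\,dx \leq e^{\varepsilon}\int_A f_{q'}(x)\,dx = e^{\varepsilon}\,\mathbb{P}(X_{Q,d'} \in A),
\]
which is precisely \eqref{eq:DP}. Since $d \sim d'$ and $A$ were arbitrary, this establishes $\varepsilon$-differential privacy.

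There is no real obstacle in this argument; the only subtle point is making sure the reverse triangle inequality is invoked in the correct direction and that the sensitivity bound is available uniformly in $x$. Because both densities are everywhere positive, no care is needed about dividing by zero or handling null sets, and the integration step is immediate once the pointwise bound is in hand.
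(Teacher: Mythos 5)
Your proof is correct and is the standard density-ratio argument: the paper itself does not prove Proposition \ref{prop:LapBasic}, instead citing it as well known from \cite{Dwork06} and \cite{DworkRoth14}, and your computation (reverse triangle inequality on the exponents, sensitivity bound, then integration over an arbitrary Borel set) is precisely the argument those references use. No gaps.
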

The Laplace mechanism is determined by the family of random variables $\{Y_q = q + L_b: q \in Q(D)\}$.  This notation (for the case where $Q(D) = [0, \infty)$) will be used frequently in the paper.  Note that $q$ denotes the 'true' query response and $\mathbb{E}[Y_q] = q$ so $Y_q$ is \emph{unbiased}.

Throughout the paper, we are concerned with the following general setup.  We are given a nonnegative valued query $Q:D \to [0, \infty)$ with sensitivity $\Delta$, and a mechanism $X_{Q, d} = Y_{Q(d)}$ where each random variable $Y_q$ has range $\mathbb{R}$ and $\mathbb{E}[Y_q] = q$.  We assume $X_{Q, d}$ is $\varepsilon$-differentially private.  For the majority of the paper, $Y_q$ will be a Laplace random variable.  We study the bias properties of two generalisations of boundary inflated truncation and truncation for constructing nonnegative mechanisms $\hat{X}_{Q, d} = \hat{Y}_{Q(d)}$.  Essentially, we construct a nonnegative (derived) family $\hat{Y}_q$ from the given family $Y_q$ such that the associated output perturbation mechanism is differentially private.  Motivated by the bias calculations in \cite{Liu16} we consider the worst case bias of these nonnegative mechanisms.   Boundary inflated truncation is generalized by considering arbitrary post-processing functions; the mechanisms are referred to as post-processed mechanisms.  We use the terminology \emph{restriction} for our generalisation of truncation (bounded) Laplace mechanisms.  The most significant contributions in this paper show that bias is inevitable for both of these generalized constructions.  We first recall the relevant definitions.

The \textbf{bias} of $\hat{Y}_q$ is given by $\mathbb{E}[\hat{Y}_q] - q$ for $q \geq 0$.  In order to analyse the question of whether some bias is inevitable for nonnegative mechanisms, we use the \emph{maximal absolute bias} of the family $\{\hat{Y}_q: q \geq 0\}$ which is defined as follows. 
\begin{definition}
\label{def:biasmax}  The maximal absolute bias of $\{\hat{Y}_q: q \geq 0\}$ is given by 
\begin{equation}\label{eq:maxbias1}
B:=\sup\{|\mathbb{E}[\hat{Y}_q] - q| : q \in [0, \infty)\}.
\end{equation}
\end{definition}

\section{Bias and nonnegative/positive mechanisms}
\label{sec:bias1}
In this section we briefly recall the core ideas of truncation and boundary inflated truncation from \cite{Liu16}, suitably adapted for nonnegative queries.  We also discuss a multiplicative mechanism from \cite{LeNyPappas13} which was introduced for strictly positive queries. 

Boundary inflated truncation works by post-processing \cite{DworkRoth14} with the standard (deterministic) ramp function 
\begin{equation}\label{eq:Tdef1}
\tau(x) = \begin{cases}
			x & \mbox{ if } x \geq 0 \\
            0 & \mbox{ otherwise.}
		\end{cases}
\end{equation}
For Laplace random variables $\{Y_q = q + L_b: q \geq 0\}$ with $b \geq \frac{\Delta}{\varepsilon}$, the mechanism corresponding to $\hat{Y}_q = \tau(Y_q)$ is $\varepsilon$-differentially private as $\tau$ is measurable \cite{HolLeiMas14, DworkRoth14}. 

It is a relatively straightforward calculation (and a limiting case of results in \cite{Liu16}) that the expectation of the random variable $\hat{Y}_q = \tau(Y_q)$ is given by 
\begin{equation}
    \label{eq:ExpPP1} \mathbb{E}[\hat{Y}_q]= q + \frac{b}{2}e^{\frac{-q}{b}}.
\end{equation}
Thus the bias of $\hat{Y}_q$ is $\frac{b}{2}e^{\frac{-q}{b}}$ and the maximal absolute bias is $\frac{b}{2}$.  By viewing boundary inflated truncation as post-processing with $\tau$, we can consider alternative functions which may give improved performance.  We see that this is indeed possible in the next section.

\textbf{Remark:} Note that for $\varepsilon$ differential privacy, we should take $b = \frac{\Delta}{\varepsilon}$; the bias of $\hat{Y}_q = \tau(Y_q)$ is then $\frac{\Delta}{2\varepsilon}e^{-\frac{q\varepsilon}{\Delta}}$.

\emph{Truncation/Restriction}

Our adaptation of truncated, or (as they are referred to in \cite{HolAntBraAon18}) bounded, Laplace mechanisms in \cite{Liu16, HolAntBraAon18} relies on the following simple result.  The proof of this is essentially identical to that used in the construction of the exponential mechanism \cite{SheTal07}; we include it here in the interests of completeness.  

\begin{proposition}\label{prop:restricted}
Let $X_{Q,d}:\Omega \to \mathbb{R}$ be an $\varepsilon$-differentially private mechanism for the query $Q:D \to [0, \infty)$.  If the family of measurable mappings $\hat{X}_{Q,d}:\Omega \to [0, \infty)$ for $d \in D$ satisfies 
\begin{equation}\label{eq:rest}\mathbb{P}(\hat{X}_{Q,d} \in A) = \frac{\mathbb{P}(X_{Q,d} \in A)}{\mathbb{P}(X_{Q,d} \in [0, \infty))}\end{equation} for all $d \in D$ and all measurable subsets $A \subseteq [0, \infty)$, then $\hat{X}_{Q,d}$ is $2 \varepsilon$-differentially private.  
\end{proposition}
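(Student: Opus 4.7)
The plan is to verify the $2\varepsilon$-DP inequality directly from the defining identity \eqref{eq:rest}, applying the original $\varepsilon$-DP guarantee of $X_{Q,d}$ twice: once to the Borel set $A$ (which controls the numerator of the ratio), and once to the half-line $[0,\infty)$ (which controls the denominator). The symmetry of the adjacency relation $\sim$ is what allows the second application to deliver its bound in the required direction.

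Concretely, for adjacent databases $d \sim d'$ and a Borel set $A \subseteq [0,\infty)$, I would start by writing
$$\mathbb{P}(\hat{X}_{Q,d} \in A) \;=\; \frac{\mathbb{P}(X_{Q,d} \in A)}{\mathbb{P}(X_{Q,d} \in [0,\infty))}.$$
Applying $\varepsilon$-DP of $X_{Q,d}$ to the Borel set $A$ (valid since $A$ is also Borel in $\mathbb{R}$) yields $\mathbb{P}(X_{Q,d} \in A) \leq e^{\varepsilon}\mathbb{P}(X_{Q,d'} \in A)$. Applying $\varepsilon$-DP to the Borel set $[0,\infty)$ with the roles of $d$ and $d'$ swapped (which is legitimate by symmetry of $\sim$) gives $\mathbb{P}(X_{Q,d'} \in [0,\infty)) \leq e^{\varepsilon}\mathbb{P}(X_{Q,d} \in [0,\infty))$, and hence
$$\frac{1}{\mathbb{P}(X_{Q,d} \in [0,\infty))} \;\leq\; \frac{e^{\varepsilon}}{\mathbb{P}(X_{Q,d'} \in [0,\infty))}.$$
Chaining the two bounds I obtain
$$\mathbb{P}(\hat{X}_{Q,d} \in A) \;\leq\; e^{2\varepsilon}\,\frac{\mathbb{P}(X_{Q,d'} \in A)}{\mathbb{P}(X_{Q,d'} \in [0,\infty))} \;=\; e^{2\varepsilon}\,\mathbb{P}(\hat{X}_{Q,d'} \in A),$$
which is precisely the required $2\varepsilon$-DP inequality.

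No significant obstacle is anticipated: the argument is a short two-step chain of applications of the DP definition, structurally identical to the standard normalisation argument used in the analysis of the exponential mechanism, as the authors flag in their preamble to the statement. The only tacit hypothesis required is that $\mathbb{P}(X_{Q,d} \in [0,\infty)) > 0$ for every $d \in D$, which is needed merely for the conditional law \eqref{eq:rest} defining $\hat{X}_{Q,d}$ to make sense, and is implicit in the assertion that such a family $\hat{X}_{Q,d}$ exists.
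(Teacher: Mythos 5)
Your proposal is correct and follows essentially the same route as the paper's own proof: bound the numerator by applying $\varepsilon$-DP to $A$, bound the denominator from below by applying $\varepsilon$-DP to $[0,\infty)$ with $d$ and $d'$ interchanged, and chain the two to obtain the factor $e^{2\varepsilon}$. Your remark that $\mathbb{P}(X_{Q,d}\in[0,\infty))>0$ is tacitly required is a fair observation that the paper leaves implicit.
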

\begin{proof} Let $d \sim d'$ be given and let $A \subseteq [0,\infty)$ be measurable.  Then as $X_{Q,d}$ is $\varepsilon$-differentially private:
\begin{equation}
\label{eq:rest1} \mathbb{P}(X_{Q,d} \in A) \leq e^{\varepsilon} \mathbb{P}(X_{Q,d'} \in A); \;\; \mathbb{P}(X_{Q,d'} \in [0,\infty)) \leq e^{\varepsilon} \mathbb{P}(X_{Q,d} \in [0,\infty)).
\end{equation}
Combining the two above inequalities we see that
\begin{eqnarray*}
\mathbb{P}(\hat{X}_{Q,d} \in A) &=& \frac{\mathbb{P}(X_{Q,d} \in A)}{\mathbb{P}(X_{Q,d} \in [0,\infty))}\\
&\leq& \frac{e^{\varepsilon}\mathbb{P}(X_{Q,d'} \in A)}{e^{-\varepsilon}\mathbb{P}(X_{Q,d'} \in [0,\infty))}\\
&=& e^{2\varepsilon} \mathbb{P}(\hat{X}_{Q, d'} \in A).
\end{eqnarray*} 
\end{proof}

If we start from a Laplace mechanism defined by the random variables $Y_q = q + L_b$, $q \geq 0$, $b \geq \frac{\Delta}{\varepsilon}$, the random variables defining the associated restricted mechanism satisfy
\begin{equation}\label{eq:restfamily}
\mathbb{P}(\hat{Y}_{q} \in A) = \frac{\mathbb{P}(Y_{q} \in A)}{\mathbb{P}(Y_{q} \in [0, \infty))}
\end{equation}
for Borel sets $A \subseteq [0, \infty)$ and $q \geq 0$.  It is reasonably straightforward to use this equation to obtain an expression for the distribution function and pdf of $\hat{Y}_q$.  Using these, we can show that:
\[\mathbb{E}[\hat{Y}_q] = q+\frac{q+b}{2e^{\frac{q}{b}}-1} \]
so the bias of $\hat{Y}_q$ in this case is $\frac{q+b}{2e^{\frac{q}{b}}-1}$.

\begin{corollary} \label{cor:biasrest}
For $q \geq 0$, let $Y_q = q + L_b$ be a Laplace random variable with $b = \frac{\Delta}{\varepsilon}$.  The bias of $\hat{Y}_q$ satisfying \eqref{eq:restfamily} is given by \[\frac{q \varepsilon + \Delta}{2 \varepsilon e^{\frac{q\varepsilon}{\Delta}}-\varepsilon}.\]
\end{corollary}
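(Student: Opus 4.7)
The plan is to derive Corollary \ref{cor:biasrest} as a direct substitution into the bias formula $(q+b)/(2e^{q/b}-1)$ stated in the paragraph immediately preceding the corollary. The substantive work therefore lies in justifying that formula from \eqref{eq:restfamily}, which I sketch below.

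First I would combine \eqref{eq:restfamily} with the Laplace density $f_q(x) = (1/(2b))e^{-|x-q|/b}$ to write the density of $\hat{Y}_q$ on $[0, \infty)$ as $\hat{f}_q(x) = f_q(x)/\mathbb{P}(Y_q \in [0, \infty))$. Splitting the integral $\int_0^\infty f_q(x)\,dx$ at $x = q$ and using the piecewise form of $|x - q|$ gives the normalising constant $\mathbb{P}(Y_q \in [0, \infty)) = 1 - (1/2)e^{-q/b}$ for $q \geq 0$.

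Next I would evaluate $\mathbb{E}[\hat{Y}_q] = \int_0^\infty x\,\hat{f}_q(x)\,dx$ by again splitting at $x = q$ and handling each piece with an elementary substitution: $u = q - x$ on $[0,q]$, $v = x - q$ on $[q, \infty)$. A short computation shows the unnormalised integral reduces to $q + (b/2)e^{-q/b}$; dividing by the normalising constant and multiplying numerator and denominator by $e^{q/b}$ yields $\mathbb{E}[\hat{Y}_q] = (2qe^{q/b} + b)/(2e^{q/b}-1)$. Subtracting $q$ and simplifying gives the bias $(q+b)/(2e^{q/b}-1)$.

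Finally, to obtain the corollary I would substitute $b = \Delta/\varepsilon$ into this bias expression: the numerator $q + b$ becomes $(q\varepsilon + \Delta)/\varepsilon$ and the exponent $q/b$ becomes $q\varepsilon/\Delta$; folding the factor $1/\varepsilon$ into the denominator produces the stated form $(q\varepsilon + \Delta)/(2\varepsilon e^{q\varepsilon/\Delta} - \varepsilon)$. There is no genuine obstacle; the only care required is in the bookkeeping for the split-range integration and in tracking the normalising factor when re-expressing the result in the $(\varepsilon, \Delta)$ parametrisation.
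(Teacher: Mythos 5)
Your proposal is correct and follows essentially the same route as the paper: compute the normalising constant $\mathbb{P}(Y_q \in [0,\infty)) = 1 - \tfrac{1}{2}e^{-q/b}$, evaluate the restricted expectation to get the bias $(q+b)/(2e^{q/b}-1)$, and substitute $b = \Delta/\varepsilon$. The only trivial slip is that to reach the form $(2qe^{q/b}+b)/(2e^{q/b}-1)$ you multiply numerator and denominator by $2e^{q/b}$ rather than $e^{q/b}$; everything else checks out.
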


In order to compare mechanisms with the same guaranteed level of differential privacy, for the post-processed mechanism, we take the scale parameter $b = \frac{\Delta}{\varepsilon}$, while for the restricted Laplace mechanism, we should take $b = \frac{2 \Delta}{\varepsilon}$.  

Consider the ratio between $B_1 = \frac{\Delta}{2\varepsilon}e^{\frac{-\varepsilon q}{\Delta}}$ (the bias of a post-processed mechanism) and $B_2 = \frac{q+\frac{2 \Delta}{\varepsilon}}{2e^{\frac{\varepsilon q}{2\Delta}}-1}$ (the bias of the restricted mechanism).  After some algebraic manipulation, we find that:
\begin{eqnarray}
\frac{B_2}{B_1} = \frac{2e^{\frac{\varepsilon q}{\Delta}}}{2e^{\frac{\varepsilon q}{2 \Delta}}-1} \left(\frac{\varepsilon q}{\Delta} + 2\right) > 2.
\end{eqnarray}
This simple calculation shows that post-processing leads to a bias that is always strictly less than that caused by restriction for the Laplace mechanism.  

\subsection{Positivity and Log-Laplace Random Variables}
An alternative approach to constructing positive mechanisms was previously studied in \cite{LeNyPappas13} in order to release models of control systems in a differentially private manner.  To preserve the stability properties of the system, the mechanisms should not change the sign of certain key parameters.  \emph{Multiplicative} mechanisms based on the \emph{log-Laplace} distribution are introduced; these can be applied provided the queries and adjacency relation considered satisfy certain technical assumptions.  We note here that these mechanisms are not appropriate for the general setting we consider.  We first recall the key aspects of the setup in \cite{LeNyPappas13}.

$Q:D \to (0, \infty)$ is a positive valued query.  It is assumed that there is some $K > 0$ such that for all $d \sim d'$:
\begin{equation}
    \label{eq:LeNyBd} \frac{|Q(d) - Q(d')|}{\min \{Q(d), Q(d')\}} \leq K.
\end{equation}
This then implies that the query $\log(Q): D \to \mathbb{R}$ will have sensitivity bounded above by $K$. Hence the mechanism $X_{Q, d} = \log(Q(d)) + L_b$ is $\varepsilon$-differentially private where $L_b$ is a Laplace random variable with mean 0 and $b = \frac{K}{\varepsilon}$.  It follows that the post-processed, positive mechanism \[\hat{X}_{Q, d} = e^{X_{Q, d}} = Q(d) e^{L_b}\] is also $\varepsilon$-differentially private.  

There are several issues with using this approach for the more general setting considered here.
\begin{itemize}
    \item From the form of \eqref{eq:LeNyBd} queries cannot take the value 0.  
    \item Even when the query is strictly positive, the bound $K$ may be significantly larger than the sensitivity of the query itself, leading to noisier mechanisms.  In fact, it may not be possible to obtain a finite bound $K$.  To see this, take $D = (0, 1]^n$ and $Q: D \to (0, 1]$ given by the mean $Q(d) = \frac{\sum_{i=1}^n d_i}{n}$.  The sensitivity of $Q$ is $\frac{1}{n}$.  On the other hand, set $d = (\gamma, \gamma, \ldots, \gamma)$ and $d' = (1, \gamma, \gamma, \ldots, \gamma)$ and consider the standard adjacency relation given by $d \sim d'$ if for some $i$, $d_j = d'_j, j \neq i$.  Then \[\frac{|Q(d) - Q(d')|}{\min \{Q(d), Q(d')\}} = \frac{1-\gamma}{\gamma n}.\]
    By choosing $\gamma$ sufficiently small, we see that there is no finite $K$ for which \eqref{eq:LeNyBd} will be satisfied in this case.
    \item Following on from the last point, the next result illustrates that even if the query $Q$ satisfies \eqref{eq:LeNyBd} for some finite $K > 0$, the multiplicative mechanism will fail to have finite expectation unless $K$ satisfies additional restrictions leading to infinite bias.
\end{itemize}
\begin{proposition}
\label{prop:MultBias} Let $K > 0$, $\varepsilon > 0$  be given and let $Q:D \to (0, \infty)$ be a query satisfying \eqref{eq:LeNyBd}.  Further, let $\hat{X}_{Q, d} = Q(d) e^{L_b}$ where $L_b$ is Laplace with mean 0 and scale parameter $b = \frac{K}{\varepsilon}$.  The expected value $\mathbb{E}[\hat{X}_{Q, d}] < \infty$ if and only if $K < \varepsilon$.
\end{proposition}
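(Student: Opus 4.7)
The plan is to reduce the question to the finiteness of the moment generating function of the Laplace random variable $L_b$ evaluated at $1$. Since $Q(d) > 0$ is a deterministic constant with respect to the randomness of the mechanism, the tower property (or simply linearity of expectation) gives $\mathbb{E}[\hat{X}_{Q,d}] = Q(d)\,\mathbb{E}[e^{L_b}]$. Thus $\mathbb{E}[\hat{X}_{Q,d}]$ is finite if and only if $\mathbb{E}[e^{L_b}]$ is finite, and the problem is purely about when the Laplace distribution has a finite moment generating function at the point $1$.

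Next, I would write out $\mathbb{E}[e^{L_b}]$ using the Laplace pdf from \eqref{eq:Lap} with $q = 0$, obtaining
\begin{equation*}
\mathbb{E}[e^{L_b}] = \frac{1}{2b}\int_{-\infty}^{0} e^{x(1 + 1/b)}\,dx + \frac{1}{2b}\int_{0}^{\infty} e^{x(1 - 1/b)}\,dx.
\end{equation*}
The first integral always converges because $1 + 1/b > 0$ for every $b > 0$. The second integral is the decisive one: it is finite if and only if the exponent coefficient $1 - 1/b$ is strictly negative, i.e. $b < 1$, in which case the integral equals $1/(1/b - 1)$, and otherwise the integrand fails to decay (or grows) at $+\infty$ and the integral diverges.

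Substituting $b = K/\varepsilon$, the condition $b < 1$ becomes exactly $K < \varepsilon$, which yields the claimed equivalence. If one wants an explicit formula on the finite side, combining the two integrals gives $\mathbb{E}[e^{L_b}] = 1/(1 - b^2)$ for $b < 1$, so $\mathbb{E}[\hat{X}_{Q,d}] = Q(d)\,\varepsilon^2/(\varepsilon^2 - K^2)$; this is not needed for the statement but is useful context for the bias discussion that follows.

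There is no real obstacle here: the argument is a one-line moment generating function computation for the Laplace distribution, and the only subtlety is noting that the divergence is driven solely by the positive tail once $b \geq 1$. The bullet-point comment preceding the proposition already asserts that this leads to infinite bias, which is immediate because $\mathbb{E}[\hat{X}_{Q,d}] = \infty$ forces $|\mathbb{E}[\hat{X}_{Q,d}] - Q(d)| = \infty$ whenever $K \geq \varepsilon$, regardless of $d$.
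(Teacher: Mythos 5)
Your proposal is correct and follows essentially the same route as the paper: factor out the constant $Q(d)$, reduce to the finiteness of $\mathbb{E}[e^{L_b}] = \frac{1}{2b}\int_{-\infty}^{\infty} e^{x}e^{-|x|/b}\,dx$, and observe that the positive tail converges precisely when $b < 1$, i.e.\ $K < \varepsilon$. You merely spell out the split of the integral and the explicit value $1/(1-b^2)$, which the paper leaves implicit.
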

\begin{proof} If $L_b$ is a Laplace random variable with mean 0 and scale parameter $b > 0$, then 
\[\mathbb{E}[e^{L_b}] = \frac{1}{2b} \int_{-\infty}^{\infty} e^x e^{-\frac{|x|}{b}} dx.\]
The integral above is finite if and only if $b < 1$.  The result follows immediately. 
\end{proof}

\textbf{Remark:} The previous result shows that even for queries satisfying \eqref{eq:LeNyBd}, the multiplicative mechanism will have an infinite bias if $K > \varepsilon$.  An identical calculation shows that $\hat{X}_{Q, d}$ will have finite variance if and only if $K < \frac{\varepsilon}{2}$.  From a practical viewpoint, these simple observations mean that for a given query satisfying \eqref{eq:LeNyBd}, it is only possible to design $\varepsilon$-differentially private mechanisms with finite mean and variance for values of $\varepsilon > 2 K$.  This is in marked contrast to mechanisms obtained by post-processing and restriction.

\section{Optimising bias over translated ramp functions}
\label{sec:BiasOpt}
Consider again a Laplace random variable $L_b$ with mean 0 and scale parameter $b > 0$.  Let $Y_q = q + L_b$ for $q \geq 0$.  For $\alpha \geq 0$, consider the translated ramp function $\tau_{\alpha}(x) = \tau(x - \alpha)$.  For $q \geq 0$, the expected value of the post-processed random variable $\tau_{\alpha}(Y_q)$ is given by
\[\mathbb{E}[\tau_{\alpha}(Y_q)] = \frac{1}{2b} \int_{\alpha}^{\infty} (x-\alpha) e^{-\frac{|x-q|}{b}} dx. \]
For a fixed $q \geq 0$, set $G(\alpha) = \mathbb{E}[\tau_{\alpha}(Y_q)]$.  

We know that for $\alpha = 0$, corresponding to the standard ramp function, the expectation of $\tau({Y_q})$ is $G(0) = q + \frac{b}{2}e^{\frac{-q}{b}}$.  This also means that the maximal absolute bias of the family $\{\tau(Y_q): q \geq 0 \}$ is $\frac{b}{2}$.  It is readily verified that the derivative of $G$ with respect to $\alpha$ is given by $G'(\alpha) = -\frac{1}{2b} \int_{\alpha}^{\infty} e^{-\frac{|x-q|}{b}} dx < 0$.  Thus for any fixed $q$, $G(\alpha)$ is a decreasing function of $\alpha$.  Moreover, for every $q$, the bias of $\tau(Y_q)$ is strictly positive, meaning that $G(0) > 0$.  These observations suggest that it may be possible to reduce the bias of $\tau(Y_q)$ by instead considering $\tau_{\alpha}(Y_q)$ for $\alpha > 0$. 

For $\alpha \geq 0$, let $B(\alpha)$ denote the maximal absolute bias \eqref{eq:maxbias1} of the family $\{\tau_{\alpha}(Y_q): q \geq 0\}$.  In the following result, we determine the minimum value of $B(\alpha)$ over $\alpha \geq 0$.  

\begin{theorem}
\label{thm:minalpha} Let $L_b$ be a Laplace random variable with mean 0 and scale parameter $b > 0$ and, for $q \geq 0$, let $Y_q = q + L_b$.  Let $\alpha^*$ be the unique solution of $\frac{b}{2}e^{-\frac{\alpha}{b}} -\alpha = 0$ in $[0, \infty)$.  Then $\min\{B(\alpha) : \alpha \geq 0\} = \alpha^*$.
\end{theorem}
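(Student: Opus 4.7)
The plan is to obtain a closed-form expression for the bias function $q \mapsto \mathbb{E}[\tau_\alpha(Y_q)] - q$ for each fixed $\alpha \geq 0$, read off $B(\alpha)$ explicitly, and then minimise the resulting one-variable expression over $\alpha$. Since $\tau_\alpha(Y_q) = \max(Y_q - \alpha, 0)$ and $Y_q - \alpha = (q - \alpha) + L_b$, integrating $\max(\cdot,0)$ against the Laplace density, with a case split on the sign of $q - \alpha$, yields
\[
\mathbb{E}[\tau_\alpha(Y_q)] = \begin{cases} (q-\alpha) + \tfrac{b}{2} e^{-(q-\alpha)/b}, & q \geq \alpha, \\ \tfrac{b}{2} e^{-(\alpha-q)/b}, & 0 \leq q < \alpha, \end{cases}
\]
so the bias $\beta(q;\alpha) := \mathbb{E}[\tau_\alpha(Y_q)] - q$ equals $-\alpha + \tfrac{b}{2} e^{-(q-\alpha)/b}$ on $[\alpha,\infty)$ and $-q + \tfrac{b}{2} e^{-(\alpha-q)/b}$ on $[0,\alpha)$.

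The next step is to verify that $\beta(\cdot;\alpha)$ is continuous at $q = \alpha$ and strictly decreasing on all of $[0,\infty)$. Monotonicity on $[\alpha,\infty)$ is immediate from the derivative $-\tfrac{1}{2} e^{-(q-\alpha)/b} < 0$, while on $[0,\alpha)$ the derivative $\tfrac{1}{2} e^{-(\alpha-q)/b} - 1$ is bounded above by $-\tfrac{1}{2}$. Consequently $\beta(\cdot;\alpha)$ attains its supremum at $q = 0$ with value $\tfrac{b}{2} e^{-\alpha/b}$, and its infimum as $q \to \infty$ equals $-\alpha$. Combining these observations gives the explicit formula
\[
B(\alpha) = \max\left\{\tfrac{b}{2} e^{-\alpha/b}, \, \alpha\right\}.
\]

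Minimising this expression reduces to a transparent balancing argument. Define $h(\alpha) = \tfrac{b}{2} e^{-\alpha/b} - \alpha$; since $h(0) = b/2 > 0$, $h(\alpha) \to -\infty$, and $h'(\alpha) = -\tfrac{1}{2} e^{-\alpha/b} - 1 < 0$, there is a unique root $\alpha^* \in (0,\infty)$. For $\alpha < \alpha^*$ the exponential term dominates, so $B(\alpha) = \tfrac{b}{2} e^{-\alpha/b} > \alpha^*$; for $\alpha > \alpha^*$ the linear term dominates, so $B(\alpha) = \alpha > \alpha^*$; and at $\alpha = \alpha^*$ the two terms coincide with common value $\alpha^*$. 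Hence $\min\{B(\alpha) : \alpha \geq 0\} = B(\alpha^*) = \alpha^*$, as claimed.

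The only real obstacle is careful bookkeeping in the case split when evaluating $\mathbb{E}[\tau_\alpha(Y_q)]$; once the piecewise bias formula is in hand, the monotonicity in $q$ collapses the supremum to the two extremes $q = 0$ and $q \to \infty$, after which the optimisation is simply a matter of locating the crossover point between an exponentially decaying and a linearly increasing bound.
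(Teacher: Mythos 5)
Your proposal is correct and follows essentially the same route as the paper: compute the piecewise bias formula, use monotonicity and continuity of the bias in $q$ to reduce $B(\alpha)$ to $\max\{\tfrac{b}{2}e^{-\alpha/b}, \alpha\}$, and then locate the crossover point $\alpha^*$. The only cosmetic difference is that the paper concludes by noting $B$ is decreasing on $[0,\alpha^*]$ and increasing on $[\alpha^*,\infty)$, whereas you compare $B(\alpha)$ directly to $\alpha^*$ on each side; both arguments are equivalent.
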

\begin{proof} Fix some $\alpha \geq 0$.  It can be readily verified by direct calculation that the expectation of $\tau_{\alpha}(Y_q)$ is given by 
\[\mathbb{E}[\tau_{\alpha}(Y_q)] = (q-\alpha) + \frac{b}{2}e^{\frac{\alpha - q}{b}} \]
for $q \geq \alpha$ and 
\[\mathbb{E}[\tau_{\alpha}(Y_q)] = \frac{b}{2}e^{\frac{q -\alpha}{b}} \]
for $0 \leq q \leq \alpha$.  
It follows that the bias $\beta_{\alpha}(q) = \mathbb{E}[\tau_{\alpha}(Y_q)] - q$ is given by:
\begin{itemize}
    \item $\beta_{\alpha}(q) = - \alpha + \frac{b}{2}e^{\frac{\alpha - q}{b}}$ for $q \geq \alpha$;
    \item $\beta_{\alpha}(q) = \frac{b}{2}e^{\frac{q -\alpha}{b}} - q$ for $0 \leq q \leq \alpha$.
\end{itemize}
It is clear that $\beta_{\alpha}(q)$ is a monotonically decreasing function of $q$ for $q \geq \alpha$.  Moreover, a simple calculation shows that for $0 \leq q \leq \alpha$, $\beta_{\alpha}'(q) = \frac{1}{2} e^{\frac{q-\alpha}{b}} - 1$ which is negative for $0\leq q \leq \alpha $.  Thus $\beta_{\alpha}(q)$ is a decreasing function of $q \geq 0$ for all $\alpha \geq 0$ and moreover it is continuous.  This implies that the maximum absolute bias of $\{\tau_{\alpha}(Y_q): q \geq 0\}$ is either given by $|\beta_{\alpha}(0)| = \frac{b}{2}e^{-\frac{\alpha}{b}}$ or the limit $\lim_{q \to \infty} |\beta_{\alpha}(q)| = \alpha$.  Formally, 
\[B(\alpha) = \max\{\frac{b}{2}e^{-\frac{\alpha}{b}}, \alpha\}.\]
To complete the proof, let $\alpha^*$ be the unique solution of $\frac{b}{2}e^{-\frac{\alpha}{b}} -\alpha = 0$ in $[0, \infty)$.  Then:
\begin{itemize}
    \item[(i)] $B(\alpha) = \frac{b}{2}e^{-\frac{\alpha}{b}}$ on $[0, \alpha^*]$;
    \item[(ii)] $B(\alpha) = \alpha$ on $[\alpha^*, \infty)$.
\end{itemize}
Clearly $B(\alpha)$ is decreasing on $[0, \alpha^*]$ and increasing on $[\alpha^*, \infty)$.  Hence the minimum value of $B(\alpha)$ on $[0, \infty)$ is given by $B(\alpha^*) = \alpha^*$ as claimed.  
\end{proof}

\textbf{Remark:} As $\alpha^* > 0$, the maximal absolute bias of $\{\tau_{\alpha^*}(Y_q): q \geq 0\}$, given by $\frac{b}{2} e^{-\frac{\alpha^*}{b}}$ is clearly less than $\frac{b}{2}$ corresponding to the standard ramp function.

\section{Bias is inevitable for post-processing and restriction}
\label{sec:biasinev}
In Section \ref{sec:bias1}, we noted that the maximal absolute bias of nonnegative mechanisms constructed by either restriction or post-processing with the ramp function is strictly positive.  In this section, we prove that this is a fundamental property of \textbf{any} post-processed, nonnegative Laplace mechanism and any nonnegative restricted mechanism (irrespective of what the original mechanism is).  

\subsection{Bias and post-processed Laplace mechanisms}
We first consider the maximal absolute bias for post-processed Laplace mechanisms.  Throughout this subsection, $\{Y_q: q \geq 0\}$ is a set of Laplace random variables with scale parameter $b > 0$; each $Y_q$ has the pdf given by \eqref{eq:Lap}.  Let $\phi:\mathbb{R} \to [0, \infty)$ be a measurable function and define the post-processed family $\hat{Y}_{\phi, q}$ by $\hat{Y}_{\phi, q} = \phi(Y_q)$ for $q \geq 0$.  

In order to ensure that each $\hat{Y}_{\phi, q}$ $q\geq 0$, has finite first and second moments, we consider post-processing functions in the Hilbert space $V = L^2(e^{-\frac{|x|}{b}}dx)$:  
\[V : = \{\phi:\mathbb{R} \to \mathbb{R}: \int_{-\infty}^{\infty} |\phi(x)|^2 e^{-\frac{|x|}{b}}dx < \infty \}.\]
The cone of nonnegative valued functions $\phi$ in $V$ is denoted by $V_+$.  
It is straightforward to show that $\phi$ in $V$ implies that $\int_{-\infty}^{\infty} |\phi(x)| e^{-\frac{|x|}{b}}dx < \infty$ also.  In the following lemma we note that given $\phi \in V_+$, the first and second moments of $\hat{Y}_{\phi, q}$ are finite for all $q \geq 0$.  
\begin{lemma}
\label{lem:finitemoments}
Let $\phi \in V_+$ be given.  Then for any $q \geq 0$:
\begin{eqnarray*}
\int_{-\infty}^{\infty} |\phi(x)| e^{-\frac{|x-q|}{b}} dx < \infty, \;
\int_{-\infty}^{\infty} |\phi(x)|^2 e^{-\frac{|x-q|}{b}} dx < \infty.
\end{eqnarray*} 
\end{lemma}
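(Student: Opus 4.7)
The plan is to reduce everything to the defining estimate for $V$ via a simple triangle-inequality comparison between the weights $e^{-|x-q|/b}$ and $e^{-|x|/b}$. The basic observation is that $|x| \leq |x-q| + q$ for $q \geq 0$, so
\[
e^{-|x-q|/b} \leq e^{q/b}\, e^{-|x|/b}
\]
holds pointwise for every $x \in \mathbb{R}$. Since $q$ is fixed, the factor $e^{q/b}$ is a finite constant and the weight with centre $q$ is dominated by a constant multiple of the weight with centre $0$.

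Once this pointwise bound is in place the second moment estimate is immediate: integrating
\[
|\phi(x)|^{2} e^{-|x-q|/b} \leq e^{q/b} |\phi(x)|^{2} e^{-|x|/b}
\]
and using the assumption $\phi \in V_{+} \subseteq V$ gives $\int |\phi(x)|^{2} e^{-|x-q|/b}\,dx \leq e^{q/b}\,\|\phi\|_{V}^{2} < \infty$. For the first moment I would invoke the remark that precedes the lemma, namely that $\phi \in V$ already implies $\int |\phi(x)|\, e^{-|x|/b}\,dx < \infty$ (this follows by Cauchy--Schwarz with the splitting $|\phi| e^{-|x|/b} = (|\phi| e^{-|x|/(2b)})\cdot e^{-|x|/(2b)}$, noting that $\int e^{-|x|/b}\,dx = 2b$). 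Applying the same pointwise domination then yields
\[
\int |\phi(x)|\, e^{-|x-q|/b}\,dx \leq e^{q/b}\int |\phi(x)|\, e^{-|x|/b}\,dx < \infty.
\]

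There is no real obstacle here; the only thing to be careful about is keeping the constant on the correct side of the inequality (the shifted weight is larger than the centred weight near $x=q$, smaller elsewhere, but the uniform bound $e^{q/b}$ absorbs both cases). No delicate measure-theoretic argument is needed since $\phi$ is already assumed measurable and the dominating functions are integrable. The argument is independent of the sign convention on $q$ and extends verbatim to any fixed real $q$ by replacing $e^{q/b}$ with $e^{|q|/b}$, which will be convenient when the lemma is used in the sequel.
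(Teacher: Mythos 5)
Your proof is correct and follows essentially the same route as the paper: both rest on the pointwise domination $e^{-|x-q|/b} \leq e^{|q|/b} e^{-|x|/b}$ obtained from the triangle inequality, combined with the fact (noted just before the lemma, and which you justify via Cauchy--Schwarz) that $\phi \in V$ already gives finiteness of the first moment at $q=0$. Nothing further is needed.
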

\begin{proof} As $\phi \in V$, we know that 
\begin{equation}
\label{eq:finmom1} \int_{-\infty}^{\infty} |\phi(x)|^p e^{-\frac{|x|}{b}} dx  < \infty
\end{equation}
for $p = 1, 2$.  The result now follows from a simple application of the triangle inequality as 
\[e^{-\frac{|x-q|}{b}} \leq e^{\frac{|q|}{b}} e^{-\frac{|x|}{b}}.\]
\end{proof}

We now consider the mapping from $V$ into $\mathbb{R}$ which takes a function $\phi$ to the maximal absolute bias given by \eqref{eq:maxbias1}.  Formally, for $\phi \in V$:
\begin{equation}
\label{eq:maxbiasphi} B(\phi) = \sup \left\{\left|\int_{-\infty}^{\infty} \phi(x) f_q(x)dx - q\right|: q \in [0, \infty)\right\}.
\end{equation}  
The following result establishes that $B(\phi) > 0$ for any $\phi \in V_+$.  
\begin{proposition}
\label{prop:biasinevpp}
Let $\phi \in V_+$ be given and let $B(\phi)$ be defined by \eqref{eq:maxbiasphi}.  Then $B(\phi) > 0$.  
\end{proposition}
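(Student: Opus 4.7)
The plan is a short argument by contradiction, exploiting the value $q=0$.

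Set $g(q) := \int_{-\infty}^{\infty} \phi(x) f_q(x)\,dx$, so that $B(\phi) = \sup_{q \geq 0}|g(q) - q|$. By Lemma \ref{lem:finitemoments}, $g(q)$ is a finite real number for every $q \geq 0$. I would assume for contradiction that $B(\phi) = 0$; this is equivalent to the functional equation $g(q) = q$ holding identically on $[0, \infty)$.

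Specialising to $q = 0$, I would observe that $f_0(x) = \frac{1}{2b} e^{-|x|/b} > 0$ at every point of $\mathbb{R}$, so the equation $g(0) = 0$, namely
\[
\int_{-\infty}^{\infty} \phi(x) \frac{1}{2b} e^{-|x|/b}\,dx = 0,
\]
together with $\phi \in V_+$ (nonnegative and measurable), forces $\phi = 0$ Lebesgue-almost everywhere. But then $g(q) \equiv 0$ on $[0, \infty)$, which contradicts $g(q) = q$ at any $q > 0$, completing the proof.

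I do not anticipate any substantial obstacle here. The only points to handle carefully are the appeal to Lemma \ref{lem:finitemoments} to guarantee finiteness of $g(q)$, and the use of the strict positivity of the Laplace density $f_0$ on all of $\mathbb{R}$ to deduce that $\phi$ vanishes almost everywhere. The genuinely delicate issue is presumably deferred to the next result, Proposition \ref{prop:infbiaspos}, which strengthens the purely qualitative conclusion $B(\phi) > 0$ to a positive lower bound; the crude evaluation at $q = 0$ used here is far too soft to yield any such uniform estimate over $\phi \in V_+$.
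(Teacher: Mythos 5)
Your proof is correct and uses essentially the same ideas as the paper's: evaluation at $q=0$, strict positivity of the Laplace density to force $\phi = 0$ a.e., and then the observation that a null $\phi$ gives bias $q$ at every $q$. The paper phrases it as a direct case split ($\phi = 0$ a.e.\ or not) rather than a contradiction, but the two arguments are logically identical.
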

\begin{proof} Consider $q = 0$.  Then as $\phi \in V_+$ and $f_0(x) > 0$ for all $x$, it follows that 
\[\int_{-\infty}^{\infty} \phi(x) f_0(x) dx  > 0\]
unless $\phi = 0$ almost everywhere.  If $\phi$ is not zero a.e., it follows immediately that
\[B(\phi) \geq \int_{-\infty}^{\infty} \phi(x) f_0(x)dx  > 0.\] 
On the other hand, if $\phi = 0$ a.e. then
\[\left|\int_{-\infty}^{\infty} \phi(x) f_q(x)dx - q\right| = q\]
for all $q \geq 0$ which means that $B(\phi) = \infty$ in this case.  
\end{proof}

\textbf{Remark:}  The last result shows that the maximal absolute bias, $B(\phi)$, of any post-processed Laplace mechanism (with finite first and second moments) must be positive.  In the next result, we establish the stronger fact that the maximal absolute bias of such mechanisms is bounded away from zero; formally $\inf\{B(\phi): \phi \in V_+\} > 0$.  

\begin{proposition}\label{prop:infbiaspos}
Let $B(\phi)$ be given by \eqref{eq:maxbiasphi} for $\phi \in V_+$.  Then 
\[\inf\{B(\phi): \phi \in V_+\} > 0.\]
\end{proposition}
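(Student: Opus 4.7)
\emph{Proof proposal.} The plan is to derive an explicit, $\phi$-independent, strictly positive lower bound on $B(\phi)$ by exploiting a differential inequality for the mean of the post-processed mechanism. Fix $\phi \in V_+$ and, assuming $B(\phi) < \infty$ (otherwise there is nothing to prove), set $g(q) = \int_{-\infty}^{\infty} \phi(x) f_q(x)\, dx = \mathbb{E}[\phi(Y_q)]$ for $q \geq 0$. By Lemma \ref{lem:finitemoments}, $g$ is everywhere finite, and $g(q) \geq 0$ because $\phi \in V_+$ and $f_q > 0$.

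First I would justify differentiating $g$ under the integral sign. A direct computation gives $\partial f_q(x)/\partial q = (2b^2)^{-1}\mathrm{sgn}(x-q)\,e^{-|x-q|/b}$ off the Lebesgue-null set $\{x = q\}$, so that $|\partial f_q/\partial q| = f_q(x)/b$. For $q$ in any bounded interval, the bound $e^{-|x-q|/b} \leq e^{|q|/b} e^{-|x|/b}$ together with the Cauchy--Schwarz observation (noted in the excerpt) that $V_+ \subseteq L^1(e^{-|x|/b}dx)$ supplies an integrable dominating function. Dominated convergence then yields $g \in C^1([0,\infty))$ with the crucial pointwise estimate
\[
|g'(q)| \;\leq\; \int_{-\infty}^{\infty} \phi(x)\,\frac{1}{2b^2}\, e^{-|x-q|/b}\,dx \;=\; \frac{g(q)}{b}.
\]

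Next I would turn $g'(q) \leq g(q)/b$ into an integral bound: multiplying through by $e^{-q/b}$ shows $(e^{-q/b}g(q))' \leq 0$, hence $g(q) \leq g(0)\, e^{q/b}$ for all $q \geq 0$. From the definition of $B(\phi)$ in \eqref{eq:maxbiasphi} we have $0 \leq g(0) \leq B(\phi)$ and $g(q) \geq q - B(\phi)$, so
\[
q - B(\phi) \;\leq\; g(q) \;\leq\; B(\phi)\, e^{q/b}, \qquad q \geq 0,
\]
which rearranges to $B(\phi) \geq q/(1 + e^{q/b})$ for every $q \geq 0$. Taking the supremum over $q$ yields a strictly positive lower bound depending only on $b$; for instance, $q = b$ already gives $B(\phi) \geq b/(1+e)$, from which $\inf\{B(\phi): \phi \in V_+\} > 0$ follows directly.

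The sharpest explicit constant obtainable this way is $b(t^{\ast}-1)$, where $t^{\ast}$ is the unique positive root of $e^{-t} + 1 = t$, but any fixed $q > 0$ is enough for the proposition. The main technical obstacle is the dominated convergence argument underpinning the differentiation step and the handling of the kink of $f_q$ at $x = q$; once these are cleared, the remainder is an entirely elementary integration of a first-order differential inequality combined with the two-sided bound built into $B(\phi)$.
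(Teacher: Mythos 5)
Your proposal is correct, and it arrives at the same essential growth estimate as the paper but packages it differently. The heart of both arguments is the bound $g(q) \leq e^{q/b} g(0)$ for $g(q) = \mathbb{E}[\phi(Y_q)]$: the paper obtains this in one line from the pointwise density inequality $f_q(x) \leq e^{q/b} f_0(x)$, whereas you derive it by differentiating under the integral sign, establishing $|g'(q)| \leq g(q)/b$, and integrating the resulting differential inequality --- a valid but noticeably heavier route (the dominated-convergence justification and the kink at $x=q$ are exactly the technicalities you flag, and they can be bypassed entirely by the pointwise bound). Where your argument genuinely departs from the paper is in its logical structure and its conclusion: the paper argues by contradiction with a sequence $\phi_n$ satisfying $B(\phi_n) < 1/n$ and shows the inequalities $q - e^{q/b}/n \leq |\mathbb{E}[\phi_n(Y_q)] - q| < 1/n$ are jointly untenable, which proves positivity of the infimum but yields no constant; you instead combine $g(0) \leq B(\phi)$ and $g(q) \geq q - B(\phi)$ directly to get $B(\phi) \geq q/(1+e^{q/b})$ for every $q$, hence the explicit, $\phi$-independent lower bound $b(t^*-1) \approx 0.278\,b$ (and your computation of the optimizer $t^*$ checks out, since $t^*-1 = e^{-t^*} = t^*/(1+e^{t^*})$ at the critical point). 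This quantitative lower bound on $\inf\{B(\phi): \phi \in V_+\}$ is a genuine strengthening of the proposition as stated and is directly relevant to the open problem raised in the paper's conclusion about characterising that infimum.
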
 
\begin{proof} We argue by contradiction.  If the infimum was equal to 0, there would exist some sequence of functions $\phi_n$ in $V_+$ such that $B(\phi_n) < \frac{1}{n}$ for all $n$.  From the definition of $B(\phi)$, this would mean that for all $q \in [0, \infty)$ and all $n \geq 0$, 
\begin{equation}\label{eq:infbias1}
\left|\int_{-\infty}^{\infty} \phi_n(x) f_q(x) dx - q \right| < \frac{1}{n}.
\end{equation}
For $q = 0$ this implies that for all $n$, $\int_{-\infty}^{\infty} \phi_n(x) f_0(x) dx < \frac{1}{n}$.  Now note that for $q \in [0, \infty)$, 
\begin{eqnarray*}
f_q(x) = \frac{1}{2b} e^{-\frac{|x-q|}{b}} \leq  e^{\frac{q}{b}} f_0(x) .
\end{eqnarray*}
This implies that for all $q \in [0, \infty)$, $n \geq 0$:
\begin{eqnarray*}
\int_{-\infty}^{\infty} \phi_n(x) f_q(x) dx &\leq&  e^{\frac{q}{b}} \int_{-\infty}^{\infty} \phi_n(x) f_0(x) dx \\
&\leq& \frac{e^{\frac{q}{b}}}{n}.
\end{eqnarray*}
We can now use this and $\phi_n \in V_+$ to conclude that for all $q, n$:
\begin{eqnarray*}
\left|\int_{-\infty}^{\infty} \phi_n(x) f_q(x) dx - q \right| &\geq& q - \int_{-\infty}^{\infty} \phi_n(x) f_q(x) dx \\
&\geq& q - \frac{e^{\frac{q}{b}}}{n}.
\end{eqnarray*}
This clearly contradicts \eqref{eq:infbias1} as together they would imply that for all $q, n$
\[q < \frac{1}{n}(e^{\frac{q}{b}}+ 1)\]
which is impossible.  
\end{proof}

\subsection{Bias and restricted mechanisms}
We now consider the maximal absolute bias of general restricted mechanisms.  Let a family of continuous real-valued random variables $\{Y_q: q \geq 0\}$ with associated pdfs $f_q$, $q \geq 0$ be given.  We make the following assumptions for all $q \geq 0$:
\begin{eqnarray}
\label{eq:restasp1} \mathbb{E}[Y_q] &=& q\\
\label{eq:restasp2} f_q(x) &>& 0 \;\;\;\; \forall x \in \mathbb{R}.
\end{eqnarray}
Thus, we are assuming that the base mechanism is unbiased and has range given by $\mathbb{R}$.  

Let $\hat{Y}_q$ denote the restricted family of nonnegative random variables satisfying \eqref{eq:restfamily}.

We will show that for $\hat{Y}_q$, the maximal bias given by \eqref{eq:maxbias1} also satisfies $B > 0$.  The proof of this makes use of the standard coupling technique from probability theory (see Section 4.12 of \cite{GriSti}) to construct a common space $\Omega_1$ on which $\hat{Y}_q$ and a copy of $Y_q$ can be defined for all $q \geq 0$. 

\begin{proposition}
\label{prop:biasinevrest} Let a family of random variables $\{Y_q: q \geq 0\} $ satisfying \eqref{eq:restasp1}, \eqref{eq:restasp2} and a restricted family $\hat{Y}_q$ satisfying \eqref{eq:restfamily} be given.  Then the maximal absolute bias $B$ given by \eqref{eq:maxbias1} satisfies $B > 0$.
\end{proposition}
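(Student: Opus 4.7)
The plan is to reduce to the single case $q=0$ and exploit unbiasedness together with the positivity assumption \eqref{eq:restasp2}. The coupling referenced in the preamble amounts to realising $\hat{Y}_q$ as $Y_q$ conditioned on the event $\{Y_q \geq 0\}$: on the probability space carrying $Y_q$, the conditional law of $Y_q$ given $Y_q \geq 0$ has density $f_q(x)/\mathbb{P}(Y_q \geq 0)$ on $[0,\infty)$, which is precisely the law specified by \eqref{eq:restfamily}. I would state this coupling at the outset, since it lets me replace distributional manipulations with familiar conditional-expectation calculations.

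Next I would set $q=0$ and compute the bias directly. Using the coupling,
\[
\mathbb{E}[\hat{Y}_0] \;=\; \mathbb{E}[Y_0 \mid Y_0 \geq 0] \;=\; \frac{\mathbb{E}[Y_0 \mathbf{1}_{\{Y_0 \geq 0\}}]}{\mathbb{P}(Y_0 \geq 0)}.
\]
The unbiasedness assumption \eqref{eq:restasp1} gives $\mathbb{E}[Y_0]=0$, and splitting this expectation over $\{Y_0 < 0\}$ and $\{Y_0 \geq 0\}$ yields
\[
\mathbb{E}[Y_0 \mathbf{1}_{\{Y_0 \geq 0\}}] \;=\; \mathbb{E}[|Y_0| \mathbf{1}_{\{Y_0 < 0\}}] \;=\; \int_{-\infty}^{0} |x|\, f_0(x)\, dx.
\]
By the positivity assumption \eqref{eq:restasp2}, $f_0(x)>0$ for all $x$, and $|x|>0$ on $(-\infty,0)$; integrability of $|x| f_0(x)$ on $(-\infty,0)$ follows from the existence of $\mathbb{E}[Y_0]$. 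Hence the numerator is strictly positive, and $\mathbb{P}(Y_0 \geq 0)$ is also strictly positive (and at most $1$), so $\mathbb{E}[\hat{Y}_0] > 0$.

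Finally, since the bias at $q=0$ is $\mathbb{E}[\hat{Y}_0] - 0 = \mathbb{E}[\hat{Y}_0] > 0$, taking the supremum in the definition of $B$ gives $B \geq \mathbb{E}[\hat{Y}_0] > 0$, which is the desired conclusion. The only mild subtlety, and arguably the sole obstacle worth flagging, is the justification that the conditional expectation is well-defined and that $|x| f_0(x)$ is integrable on $(-\infty,0)$; both follow immediately from \eqref{eq:restasp1} and \eqref{eq:restasp2}, so no additional hypotheses on $\{Y_q\}$ are required. Note that the argument only uses the base mechanism at the single parameter value $q=0$, so in contrast to Proposition \ref{prop:infbiaspos} there is no need for a contradiction argument indexed over sequences of mechanisms.
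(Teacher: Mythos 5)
Your proof is correct, and it takes a genuinely shorter route than the paper's. The paper proves the stronger intermediate fact that $\mathbb{E}[\hat{Y}_q] > q$ for \emph{every} $q \geq 0$: it first shows $F^R_q(t) < F_q(t)$ for all $t$ (strict stochastic dominance) and then constructs the quantile coupling $\hat{Y}^1_q(\omega) = \inf\{t : F^R_q(t) > \omega\}$, $Y^1_q(\omega) = \inf\{t : F_q(t) > \omega\}$ on $[0,1]$ with Lebesgue measure, so that $\hat{Y}^1_q \geq Y^1_q$ everywhere with a gap of at least $K$ on a set of positive measure, forcing the expectations to differ strictly. You instead read \eqref{eq:restfamily} as saying that $\hat{Y}_q$ is distributed as $Y_q$ conditioned on $\{Y_q \geq 0\}$ and evaluate only at $q = 0$, where unbiasedness gives $\mathbb{E}[Y_0 \mathbf{1}_{\{Y_0 \geq 0\}}] = \int_{-\infty}^0 |x| f_0(x)\,dx > 0$ by \eqref{eq:restasp2}; since a single query value with strictly positive bias already forces $B > 0$, this suffices, and your integrability remarks are adequately covered by \eqref{eq:restasp1}. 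What the paper's route buys is the uniform statement that restriction biases the output upward at every query value, which is informative beyond the bare conclusion $B>0$; what yours buys is brevity and the complete avoidance of the coupling machinery. Note that your computation extends to all $q \geq 0$ with one extra line, namely $\mathbb{E}[Y_q \mathbf{1}_{\{Y_q \geq 0\}}] = q + \int_{-\infty}^0 |x| f_q(x)\,dx > q \geq q\,\mathbb{P}(Y_q \geq 0)$, so even the stronger pointwise conclusion is available without coupling if it is wanted.
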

\begin{proof} For $q \geq 0$, let $F_q$ denote the cumulative distribution function of $Y_q$ and $F^R_q$ the cdf of $\hat{Y}_q$.  As $\hat{Y}_q$ takes values in $[0, \infty)$, $F^R_q(t) = 0$ for $t < 0$.  Moreover, it follows from \eqref{eq:restfamily} that for $t \geq 0$:
\begin{eqnarray*}
F^R_q(t) &=& \mathbb{P}(\hat{Y}_q \leq t) \\
%&=& \frac{\mathbb{P}(Y_q \in [0, t])}{\mathbb{P}(Y_q \in [0, \infty))}\\
&=& \frac{F_q(t) - F_{q}(0)}{1 - F_{q}(0)}. 
\end{eqnarray*}    
We now show that $F^R_q(t) < F_q(t)$ for all $t \in \mathbb{R}$.  This is trivially true for $t < 0$.  For $t \geq 0$, 
\begin{eqnarray*}
F^R_q(t) &=& \frac{F_q(t) - F_{q}(0)}{1 - F_{q}(0)}  \\
&=& \frac{F_q(t)(1-F_{q}(0)) + F_{q}(0)(F_q(t) - 1)}{1 - F_{q}(0)}\\
&=&F_q(t) + F_{q}(0)\frac{(F_q(t) - 1)}{1 - F_{q}(0)}.
\end{eqnarray*}    
As $f_q(x) > 0$ for all $x$ in $\mathbb{R}$, it follows that $F_q(t) < 1$ for all $t \in \mathbb{R}$ and $F_{q}(0) < 1$.  This immediately implies that $F^R_q(t) < F_q(t)$ for $t \geq 0$ also.  Therefore, the restricted mechanism $\hat{Y}_q$ stochastically dominates $Y_q$ \cite{GriSti} for all $q \geq 0$.  This means that we can construct a probability space $(\Omega_1, \mathcal{F}_1, \mathbb{P}_1)$ and random variables $\hat{Y}^1_q$, $Y^1_q$ for $q \geq 0$ such that   
\begin{enumerate}
\item $\hat{Y}^1_q$ has the same distribution (cdf) as $\hat{Y}_q$ and $Y^1_q$ has the same distribution as $Y_q$ for all $q \geq 0$;
\item $\hat{Y}^1_q(\omega) \geq Y^1_q(\omega)$ for all $q \geq 0$ and all $\omega \in \Omega_1$. 
\end{enumerate}
We make a slight adaptation of a standard construction in order to strengthen statement 2 slightly and prove that $\hat{Y}^1_q$ has strictly positive bias.   We set $\Omega_1 = [0, 1]$, take $\mathcal{F}_1$ to be the Borel subsets of $[0, 1]$, and define $\mathbb{P}_1$ to be the Lebesgue measure on $[0, 1]$.  For $q \geq 0$, we define random variables $\hat{Y}^1_q$, $Y^1_q$ on $\Omega_1$ by setting:
\begin{eqnarray*}
\hat{Y}_q^1(\omega) &=& \inf\{t: F^R_q(t) > \omega\}\\
Y^1_q(\omega) &=& \inf\{t:F_q(t) > \omega\}.
\end{eqnarray*}
As each of the cdfs, $F^R_q$, $F_q$ for $q \geq 0$ is continuous and non-decreasing, it is not difficult to see that for $t \in \mathbb{R}$, $\hat{Y}_q^1(\omega) \leq t \Leftrightarrow \omega \leq F^R_q(t)$, and $Y^1_q(\omega) \leq t \Leftrightarrow \omega \leq F_q(t)$.  These two facts imply that 
\[\mathbb{P}(\hat{Y}^1_q \leq t) = F^R_q(t); \mathbb{P}(Y^1_q \leq t) = F_q(t).\]
It follows immediately that as $\hat{Y}^1_q$ has the same distribution as $\hat{Y}_q$, $\mathbb{E}[\hat{Y}^1_q] = \mathbb{E}[\hat{Y}_q]$.  Similarly, $\mathbb{E}[Y^1_q] = \mathbb{E}[Y_q] = q$.  Furthermore, as $F^R_q(t) < F_q(t)$ for all $t \in \mathbb{R}$, $\hat{Y}_q^1(\omega) \geq Y^1_q(\omega)$ for all $\omega \in [0, 1]$.  We next show that for every $q \geq 0$, there exists some subset $S_q$ of $(0, 1)$ of positive measure with the property that 
\[\hat{Y}_q^1(\omega) > Y^1_q(\omega) \;\; \forall \omega \in S_q.\]

As $F_q(t) = \int_{-\infty}^t f_q(x)dx$ with $f_q(x) > 0$ for $x \in (-\infty, \infty)$ by assumption, it follows that we can choose $K > 0$ and $\alpha > 0$ such that $F_q(-K) = \alpha$.  This immediately implies that $Y^1_q(\omega) \leq -K$ for $\omega \in (0, \alpha)$.  Moreover, by construction $\hat{Y}^1_q(\omega) \geq 0$ for all $\omega \in (0, 1)$ and hence taking $S_q = (0, \alpha)$, we have that 
\[\hat{Y}^1_q(\omega) - Y^1_q(\omega) \geq K, \;\; \forall \omega \in S_q.\]
It now follows immediately that: 
\begin{eqnarray*}
\mathbb{E}[\hat{Y}_q] &=& \mathbb{E}[\hat{Y}^1_q] \\
&=& \int_{\Omega_1} \hat{Y}^1_q(\omega) d\mathbb{P}_1(\omega) \\
&=& \int_{S_q} \hat{Y}^1_q(\omega) d\mathbb{P}_1(\omega) + \int_{S_q^c} \hat{Y}^1_q(\omega) d\mathbb{P}_1(\omega)\\
&>& \int_{S_q} Y^1_q(\omega) d\mathbb{P}_1(\omega) + \int_{S_q^c} Y^1_q(\omega) d\mathbb{P}_1(\omega)\\
&=& \mathbb{E}[Y^1_q] = \mathbb{E}[Y_q].
\end{eqnarray*}
Here $S_q^c = \Omega_1 \setminus S_q$. The argument above shows that, for any $q \geq 0$ $\mathbb{E}[\hat{Y}_q] > \mathbb{E}[Y_q] = q$ and hence the maximal absolute bias $B$ satisfies:
\[B = \sup\{|\mathbb{E}[\hat{Y}_q] - q| : q \geq 0\} > 0.\]
\end{proof}

\section{Conclusions and Discussion}
\label{sec:conc}
An advantage of viewing boundary inflated truncation (to use the terminology of \cite{Liu16}) as post-processing with the ramp function, $\tau$, is that this opens up the possibility of using alternative post-processing functions in order to reduce bias for nonnegative, post-processed mechanisms.  The results presented in Section \ref{sec:BiasOpt} show that this is indeed possible even by using simple translations of the ramp function.  We have given an explicit characterisation of the optimal post-processing function within this class of functions.  The work of Section \ref{sec:biasinev} proves that the maximum absolute bias of any nonnegative post-processed mechanism must be strictly positive.  We have also derived a corresponding result for restricted mechanisms constructed from any initial mechanism.  

The results here suggest a number of directions for future work.  The result of Proposition \ref{prop:infbiaspos} shows that the worst case bias of any nonnegative post-processed Laplace mechanism is positive.  This means that the infimal or minimal value over all post-processing functions is positive.  An interesting problem is to quantitatively characterise this infimum and determine whether it is attained by some post-processing function (in essence, derive a theorem guaranteeing the existence of a minimiser).  If such a function exists, providing an explicit, or implicit, characterisation of it would also be interesting, both practically and theoretically.  The more general question of characterising the optimal mechanism for nonnegative queries is a natural extension of this line of work.  Corresponding questions for the restriction-based mechanisms can also be investigated.  

All of our contributions concern bias and it is important to also consider other performance metrics such as the MSE.  The authors have done some initial work in this direction and hope to report results of this nature in the near future.  Finally, it would be very interesting to study how our results for unbounded nonnegative queries relate to those of \cite{Cal20} and \cite{FioHen20B} for integer-valued and bounded constrained queries.
%MSE, char

%Significant because - formulae for both - interpreting truncation as post-processing opens up new nonnegative derived mechanisms - shown that possible to do better and determined optimal for class of functions.  Demonstrated that bias is unavoidable for both approaches - post-processing for Laplace and restriction in general.  Also shown that alternative approach via multiplicative not suitable for general queries and adjacency relations.  

%Interesting future directions include - determining minimal possibile bias over all post-processed mechanisms.  Considering same question for other base mechanisms that the Laplace.  Corresponding results for MSE - to be reported in future publications of the authors. 
\section*{Acknowledgments}
The authors would like to express their sincere gratitude to the editors and anonymous reviewers for their careful reading of the manuscript.  We are particularly thankful for the thoughtful and insightful suggestions; these have not only helped to improve the paper but have also highlighted some interesting related work.

This work was supported by the Science Foundation Ireland Grant 13/RC/2094, and the European Regional Development Fund through the Southern \& Eastern Regional Operational Programme to Lero -- the Irish Software Research Centre (www.lero.ie).

% You may incorporate your references as follows in your main tex file.
% Using BibTex is not recommended but can be handled.

\medskip
% The data information below will be filled by AIMS editorial staff
Received xxxx 20xx; revised xxxx 20xx.
\medskip

\end{document}